\newtheorem{theorem}{Theorem}
\newtheorem{corollary}[theorem]{Corollary}
\newtheorem{lemma}[theorem]{Lemma}
\newtheorem{definition}{Definition}
\newtheorem{remark}{Remark}
\newtheorem{algorithm}{Algorithm}[section]
\def\be{\begin{equation}}
\def\ee{\end{equation}}
\def\bea{\begin{eqnarray}}
\def\eea{\end{eqnarray}}
\def\bea*{\begin{eqnarray*}}
\def\eea*{\end{eqnarray*}}
\def\bt{\begin{theorem}}
\def\et{\end{theorem}}
\def\bl{\begin{lemma}}
\def\el{\end{lemma}}
\def\br{\begin{remark}}
\def\er{\end{remark}}
\def\bc{\begin{corollary}}
\def\ec{\end{corollary}}
\def\bd{\begin{definition}}
\def\ed{\end{definition}}
\def\bd{\begin{algorithm}}
\def\ed{\end{algorithm}}
\newcounter{test}
\newtheorem*{problem0}{Problem P$_0$}
\newtheorem*{problemP}{Problem P$_0^r$}
\journal{Asia-Pacific Journal of Chemical Engineering}
\begin{document}
\begin{CJK*}{GBK}{kai}
\begin{frontmatter}

\title{Computational Optimal Control of the Saint-Venant PDE Model Using the Time-scaling Technique~\footnote{This work was supported by the National Natural Science Foundation of China
grants (61473253, 61320106009),  the National High Technology Research and Development Program of China 2012AA041701, and  Innovation Joint Research Center for Cyber-Physical-Society System.
 }}

\author{
Tehuan Chen$^a$, Chao Xu$^{a,}$\footnote{Correspondence to: Chao Xu, Email: cxu@zju.edu.cn}}
\address{$^a$ State Key Laboratory of Industrial Control Technology and Institute of Cyber-Systems \& Control, Zhejiang University, Hangzhou, Zhejiang 310027, China.}

\begin{abstract}
This paper proposes a new time-scaling approach for computational optimal control of a distributed parameter system governed by the Saint-Venant PDEs. We propose the time-scaling approach, which can change a uniform time partition to a nonuniform one. We also derive the gradient formulas by using the variational method. Then the method of lines (MOL) is applied to compute the Saint-Venant PDEs after implementing the time-scaling transformation and the associate costate PDEs. Finally, we  compare the optimization results using the proposed time-scaling approach with the one not using it. The simulation result demonstrates the effectiveness of the proposed time-scaling method.
\end{abstract}

\begin{keyword}
Time-scaling approach, Optimal boundary control, Method of lines (MOL), Control parameterization method
\end{keyword}

\end{frontmatter}

\section{Introduction}
The one dimensional (1D) Saint-Venant (SV) model  is a nonlinear hyperbolic system governed by quasilinear PDEs which can be obtained from the full Navier-Stokes equations (NSE) under certain assumptions and simplifications (i.e., \cite{zhou2004lattice,litrico2009modeling}). In hydraulics, the SV model is widely used to describe transient dam break analysis, open-channel flows and surface runoff. In addition, many phenomena arising in physical applications can be also modeled by the SV model, such as fluid flows in gas distribution pipeline networks, open channel flows, multiphase flow in  pipelines to transport crude oil over long distances (i.e., \cite{zhai2012leak,rao2002contribution}), just to name a few.
In this work, we are interested in a boundary control problem of water hammer phenomenon while manipulating pipeline valves in large scale facilities for liquid distribution. Water hammer is also known as hydraulic shock which is a sharp pressure transition caused by changing the fluid motion state suddenly to halt or a reversed flow direction. This pressure wave could cause harmful effects to the hydraulic facilities, from noise and structural vibration to critical pipe component collapse. There are many applications for mitigation of water hammer, such as oil pipelines leakage \cite{xu2015sensor}, spacecraft propulsion systems \cite{lecourt2007experimental}, and even cardiovascular flow of blood vessels \cite{Pedley1980flow}.
Therefore, passive mitigation methods are widely used to control water hammer, such as accumulators, expansion tanks and surge tanks \cite{ghidaoui2005review}.
The proposed strategy in the current work is to generate  valve actuation command through computational optimization techniques based on the dynamic PDE model of water hammer, which can reduce the hydraulic shock as much as possible. Making boundary valves as active actuation could be an alternative or supplement to various passive protection measures.

Essentially, mitigation of water hammer using boundary valve actuation can be considered as a boundary stabilization problem in terms of the SV model in the point of view of PDE control.  The characteristic method is one of the most important methods in the boundary control of SV model \cite{cen2014boundary,de2003boundary}. There are mainly two streams of approaches of boundary stabilization of hyperbolic PDEs {based on the characteristic method}, including the Lyapunov functional method (e.g., \cite{coron2007strict,dos2008boundary}) and the backstepping technique~\cite{krstic2008boundary}.
A strict Lyapunov function for hyperbolic systems of conservation laws is presented in~\cite{coron2007strict} which can generate a boundary control law to guarantee the local convergence of the state towards a desired set point. The static feedback control law can be implemented as a feedback of the state only measured on the boundaries.
A feedback control strategy is proposed in \cite{dos2008boundary} which ensures that the water level and water flow can converge to the equilibrium exponentially.
The backstepping technique has been extended to handle boundary stabilization of $2 \times 2$ hyperbolic linear and quasilinear PDEs, which allows $L^2$-exponential convergence of the closed-loop and state estimation dynamics \cite{coron2013local,vazquez2011backstepping}. 
Recently, a receding horizon optimal control (RHOC) for water hammer mitigation is investigated for hydraulic pipeline systems described by the linearized SV model \cite{pham2014predictive}. The approximate dynamic programming (ADP) framework is extended to a distributed parameter system described by a set of hyperbolic PDEs \cite{joy2011approximate}.

The current work considers a computational optimal control of the nonlinear SV model in contrast to a feedback stabilizing controller. Running the computational optimal control offline combined with an online tracking controller could be promising to realize a feedback controller for water hammer mitigation in practice. In general, there are mainly two categories of approaches to handle computational optimal control of infinite dimensional systems governed by PDEs, i.e., \textit{discretize-then-optimize} (DTO) \cite{yu2014approximation} and \textit{optimize-then-discretize} (OTD) \cite{fredi2010optimal}. In the framework of DTO, PDEs are first discretized into finite dimensional systems governed by ODEs using various numerical methods, such as the finite volume method (FVM), the lattice Boltzmann method (LBM), and the method of lines (MOL). Then, classical computational techniques can be applied to solve the reduced optimal control problem, such as the control parameterization method, the time-scaling method and the exact penalty method \cite{linsurvey2013,li2013time,Teo1991,liu2014computational}. While in the framework of OTD, optimality conditions and gradient formula can be derived directly based on the PDEs and solve the coupled state and co-state PDEs using various numerical techniques \cite{wang2011application}. 

In this paper, we extend the control parameterization method for finite dimensional control systems to an infinite dimensional system which is governed by the SV model (e.g., \cite{Teo1991}).  We developed a discretize-then-optimize computational approach for solving  optimal control strategy of the SV model in \cite{chent2014}. This approach first uses the  finite-difference method to approximate the PDE model by a system of ODEs, then applies  control parameterization  \cite{Teo1991}  to approximate the boundary control function. While in~\cite{chent2015}, we propose an  alternative computational approach in which control parameterization is applied directly to the original SV model, then finite-difference methods are used  to solve both the  PDE model and costate equations. In both \cite{chent2014} and \cite{chent2015}, the time partition used to parameterize the control input is equally divided. However, we realize that the control trajectory varies slop at different time instance and this motivates us to use less parameters for slowly changing segments but more for comparably fast changing ones. Therefore, we add a new optimization decision variable for the temporal step in control parameterization. This allows us to adaptively select the best switching time instants, which result in a better control approximation. This ideal is called the time-scaling technique in the literature of computational optimal control of finite dimensional systems \cite{Teo1991} but not complete for infinite dimensional systems governed by PDEs.   

The rest of the paper is organized as follows. In Section \ref{Statement}, we state an
optimal control problem for fluid flow during valve closure.  In Section \ref{Time Grid Adaptive}, the control parameterization method of the SV model using the time-scaling  approach is applied to approximate the boundary control by piecewise linear functions. Then, it changes the boundary optimal problem to optimal parameter selection problem. In Section \ref{sensitivity},  we obtain the costate equations together  with their boundary conditions as well as terminal conditions and the gradient formulas are derived  by using the variational analysis method with respect to the control and time parameters. In Section \ref{numerical}, we  use the  MOL to compute the solutions of the state system and its costate system.  Finally, we carry out numerical simulations  to compare the control trajectories when the  time-scaling approach is applied and not, respectively.
\section{Statement of the Optimal Control Problem}\label{Statement}
The mathematical formulation of the optimal control problem with respect to the SV model can be stated as follows:
\begin{equation}
\begin{aligned}
\min_{u}J(u(t)) = \frac{1}{T}\int_0^T {\bigg[\frac{p(L,t) - P}{\bar P}\bigg]^{2\gamma} } dt + \frac{1}{{LT}}{\int_0^L \int_0^T  {\bigg[\frac{p(l,t) - P}{\bar P}\bigg]^{2\gamma}  dtdl}}, \label{Objective funtion}
\end{aligned}
\end{equation}
where  $l\in[0,L]$ denotes the spatial,  $t\in[0,T]$ is the time, $\gamma$ is a positive integer, and $\bar{P}$ is a given constant datum. The objective function (\ref{Objective funtion}) consists of two terms: the first term penalizes pressure fluctuation at the  terminus while the second term penalizes pressure fluctuation at all points over the physical domain.  Considering the actuator situated at terminal point which contains sensitive components that can be easily damaged,  we place special emphasis at this point.  The  pressure drop $p(l,t)$ is the unique solution of the following initial value problem
\begin{subequations}\label{orignalsystem}
\begin{align}
&H_1(l,t)=\frac{{\partial v(l,t)}}{{\partial t}}   + \frac{1}{\rho}\frac{{\partial p(l,t)}}{{\partial l}} + \frac{{f v(l,t) \left| {v(l,t)} \right| }}{{2D}}=0,
\label{system:1}\\
&H_2(l,t)=\frac{{\partial p(l,t)}}{{\partial t}} + {\rho c^2 }\frac{{\partial v(l,t)}}{{\partial l}}=0, \label{system:2}\\
&p(l,0) = \phi_1 (l),~~v(l,0) = \phi_2 (l),~~l\in[0,L], \label{initial conditions}
\end{align}
\end{subequations}
where  $v(l,t)$ is the flow velocity, $\phi_1 (l)$ and $\phi_2(l)$ are given functions describing the initial state of the pipeline, $D$ is the cross-sectional area, $c$ is the wave velocity, $f$ is the Darcy-Weisbach friction factor and $\rho$ is the flow density which is usually considered as a constant. The benchmark model is shown in Figure \ref{pipeline}, where a  pipeline of length $L$ is used to transport fluid from a reservoir to a terminus. Then the boundary conditions for system~(\ref{orignalsystem}) are chosen as
\begin{subequations}\label{bcontrol}
\begin{align}
&p(0,t) =P,\label{boundary condition001}\\
&{v(L,t) = u (t)},~~t\in[0,T],\label{boundary condition002}
\end{align}
\end{subequations}
where $P$ is the  constant pressure generated by the reservoir which is very common in practice. $u(t)$ is a boundary control variable that models actuation such as a valve or water gate at the system terminus and subjected to the following constraints
\begin{subequations}\label{bcont}
\begin{align}
&u(0) = u_{\max}, \label{control initial}\\
&u(T) = 0, \label{control constrain1}
\end{align}
\end{subequations}
where $u_{\max}$ denotes the  maximum  velocity.
\begin{remark}
Note that the open channel flows can be also modeled by the SV model. However, the variables of the flow are flow speed and water lever. This is different from the pressure pipe flow considered in this paper. For more information on open channel flows, please refer to \cite{litrico2009modeling}.
\end{remark}
\begin{figure}
\centering\includegraphics[scale=0.9]{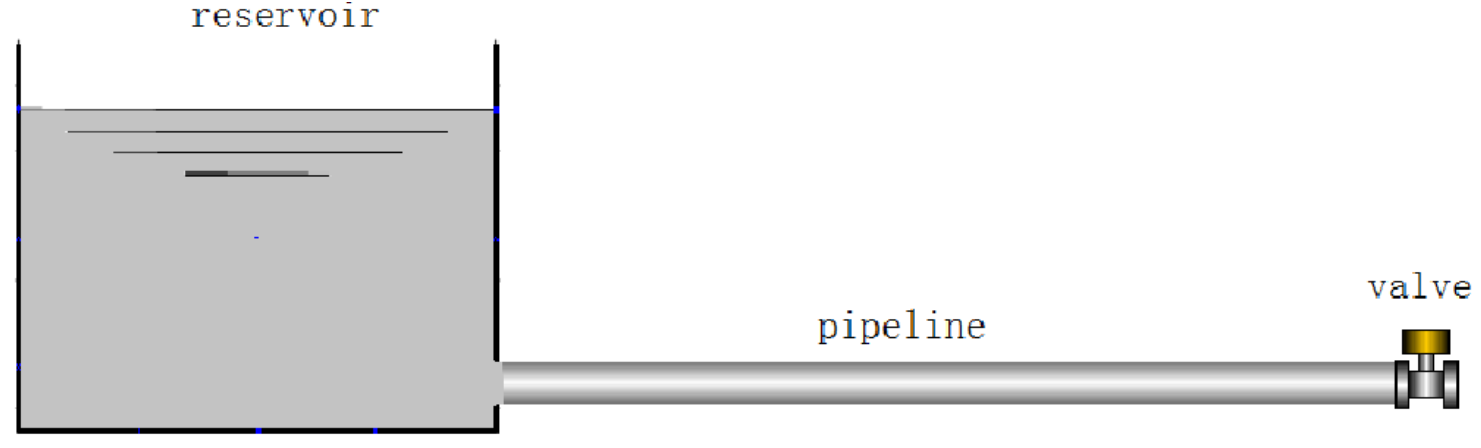}
\caption{General layout of the pipeline system }
\label{pipeline}
\end{figure}

%
%
%
\begin{problem0}
\textit{Given the system (\ref{system:1}) (\ref{system:2}) with initial conditions (\ref{initial conditions}) and  boundary conditions (\ref{bcontrol}), choose the $u(t)$  with initial conditions (\ref{control initial}) to minimize the objective function (\ref{Objective funtion})  subject to the terminal  control constraint (\ref{control constrain1}).}
\end{problem0}

\section{Time-scaling  Approach} \label{Time Grid Adaptive}
By considering the flow rate is continuous, we can approximate the control signal $u(t)$ by piecewise-linear basis functions:
\be\label{piece-linear case}
u(t)\approx  \sum\limits_{k = 1}^r (\sigma_1^kt+ \sigma_2^k)\chi_{[t_{k - 1} ,t_k )}(t),
\ee
where  $\bm \sigma^k \triangleq \{ \sigma^k_1,  \sigma^k_2 \} \in \mathbb{R}^2$, $k=1, \dots, r,$ are parameter vectors to be optimized and $\chi _{[t_{k - 1} ,t_k )}(t) $ is the indicator function defined by
\begin{equation}
\chi_{[t_{k-1},t_k)}(t)=
\begin{cases}
1,&\text{if $t\in [t_{k-1},t_k)$},\\
0,&\text{otherwise},
\end{cases}
\end{equation}
and $t_k, ~k = 0, \dots, r,$ are switching points such that
\begin{equation}
0 = t_0 < t_1 < t_2 < \dots < t_{r-1} < t_r = T.
\end{equation}
Due to the continuity of flow rate, we have
\be\label{piecelinearcase}
\sigma_1^kt_k+ \sigma_2^k=\sigma_1^{k+1}t_{k}+ \sigma_2^{k+1},\quad k=1,...,r-1.
\ee
Furthermore, to ensure that the initial condition (\ref{control initial}) and terminal control constraint (\ref{control constrain1})  is satisfied  (or the compatibility condition), we must have
\begin{equation}\label{terminalconstraint}
\sigma_2^1=u_{\max}, \quad \sigma_1^rT+ \sigma_2^r=0.
\end{equation}
The time-scaling approach is to find the best temporal partition of each interval $[t_{k-1},t_k]$, which means that we consider the switching points as the optimized parameters. However, switching time problem is difficult to solve, so we should transform it into a new problem with fixed switching times \cite{loxton2008optimal}. Thus, the time-scaling  function is defined as follows:
\begin{equation}
t(s)\triangleq \psi (s\left| \bm \theta  \right.)=
\begin{cases}
\sum\limits_{k = 1}^{\left\lfloor s \right\rfloor } {\theta ^k }  + \theta ^{\left\lfloor s \right\rfloor  + 1} (s - \left\lfloor s \right\rfloor ),&\text{if $s\in [0,r)$},\\
T,&\text{$s=r$},
\end{cases}
\end{equation}
where $\left\lfloor s \right\rfloor$ donates an integer which is not larger than $s$. The relationship between $t$ and  $s$ can be also defined through the following differential equation:
\begin{equation}\label{time diferential}
\begin{aligned}
&\frac{{dt(s)}}{{ds}} = \sum\limits_{k = 1}^r {\theta ^k }\chi_{[k-1,k)}(s), \quad s\in [0,r],\\
&t(0)=0,
\end{aligned}
\end{equation}
where $\theta ^k  = t_k  - t_{k - 1}$ and $\theta ^k> 0$.

We change the original time variable ``$t$''  into a new auxiliary variable ``$s$''. Then the approximate piecewise-linear control (\ref{piece-linear case}) can be written as
\begin{equation}\label{approximate piecewise}
u^r(s;\bm \sigma,\bm \theta) =\sum\limits_{k = 1}^r {\bigg\{\sigma _1^k (\sum\limits_{k = 1}^{\left\lfloor s \right\rfloor } {\theta ^k }  + \theta ^{\left\lfloor s \right\rfloor  + 1} (s - \left\lfloor s \right\rfloor )) +\sigma _2^k\bigg \} } \chi_{[{k - 1} ,k )}(s).
\end{equation}
By denoting
\begin{equation}
\begin{aligned}
\tilde p(l,s) = p(l,\psi (s\left|\bm \theta  \right.)),\quad \tilde v(l,s) = v(l,\psi (s\left| \bm \theta  \right.)),\\
\end{aligned}
\end{equation}
the equation (\ref{system:1}) becomes
\begin{equation}\label{transsystem:1}
\begin{aligned}
\dot  {\tilde  v}(l,s) &= \frac{{\partial v(l,\psi (s\left|\bm \theta  \right.)) }}{{\partial s}} = \frac{{\partial  v(l,\psi (s\left| \bm \theta  \right.)) }}{{\partial t}}\frac{{ \partial \psi (s\left| \bm \theta  \right.)}}{{\partial s}}\\
&= \theta^k \bigg[- \frac{1}{\rho}\frac{{\partial \tilde{p}(l,s)}}{{\partial l}} - \frac{{f \tilde{v}(l,s) \left| {\tilde{v}(l,s)} \right| }}{{2D}}\bigg],\quad s \in (k-1,k),\quad k=1,\dots,r,
\end{aligned}
\end{equation}
and the transformed form of (\ref{system:2})  can be obtained following the same procedure in deriving (\ref{transsystem:1}). Then the SV model becomes
\begin{subequations}\label{system001}
\begin{align}
&H_1(l,s)=\frac{{\partial \tilde{v}(l,s)}}{{\partial s}}   + \theta^k \frac{1}{\rho}\frac{{\partial \tilde{p}(l,s)}}{{\partial l}} + \theta^k \frac{{f \tilde{v}(l,s) \left| {\tilde{v}(l,s)} \right| }}{{2D}}=0,
\label{system:01}\\
&H_2(l,s)=\frac{{\partial {\tilde{p}(l,s)}}}{{\partial s}} + \theta^k {\rho c^2 }\frac{{\partial \tilde{v}(l,s)}}{{\partial l}}=0, \label{system:02}
\quad s \in (k-1,k), \quad k=1,\dots,r,\\
&\tilde p(l,0) = p(l,\psi (0\left|\bm \theta  \right.))=\phi_1(l),\quad \tilde v(l,0) = v(l,\psi (0\left|\bm \theta  \right.))=\phi_2(l). \label{system:03}
\end{align}
\end{subequations}
Under the approximation (\ref{approximate piecewise}) for the control input sequence, the objective function (\ref{Objective funtion}) becomes
\begin{equation}
\begin{aligned}
&{J^r(\bm {\sigma},\bm \theta)}\\&= \frac{1}{T}\int_0^T {\bigg[\frac{p^r(L,t) - P}{\bar P}\bigg]^{2\gamma} } dt + \frac{1}{{LT}} \int_0^L \int_0^T {{\bigg[\frac{p^r(l,t) - P}{\bar P}\bigg]^{2\gamma}  dtdl}}
\\&=\frac{1}{T}\sum\limits_{k = 1}^r \int_{t_{k-1}}^{t_k} {\bigg[\frac{p^r(L,t) - P}{\bar P}\bigg]^{2\gamma} } dt + \frac{1}{{LT}}  {\int_0^L \bigg \{ \sum\limits_{k = 1}^r  \int_{t_{k-1}}^{t_k} {\bigg[\frac{p^r(l,t) - P}{\bar P}\bigg]^{2\gamma}  dt \bigg \} dl}}
\\&=\frac{1}{T}\sum\limits_{k = 1}^r \int_{\psi (k-1\left| \bm \theta  \right.)}^{\psi (k\left| \bm \theta  \right.)} {\bigg[\frac{p^r(L,t) - P}{\bar P}\bigg]^{2\gamma} } dt + \frac{1}{{LT}}  {\int_0^L \bigg \{ \sum\limits_{k = 1}^r  \int_{\psi (k-1\left|\bm \theta  \right.)}^{\psi (k\left|\bm \theta  \right.)} {\bigg[\frac{p^r(l,t) - P}{\bar P}\bigg]^{2\gamma}  dt \bigg \} dl}}
\\&=\frac{1}{T}\sum\limits_{k = 1}^r \int_{k-1}^{k} \theta^k {\bigg[\frac{\tilde{p}^r(L,s) - P}{\bar P}\bigg]^{2\gamma} } ds + \frac{1}{{LT}}  {\int_0^L \bigg \{ \sum\limits_{k = 1}^r  \int_{k-1}^{k} \theta^k {\bigg[\frac{\tilde{p}^r(l,s) - P}{\bar P}\bigg]^{2\gamma}  ds \bigg \} dl}},
\label{obj-rtrans}
\end{aligned}
\end{equation}
where $p^r(l,t)$, $\tilde{p}^r(l,s)$ denote the solution of system (\ref{system:1}) (\ref{system:2}) with $u(t)=u^r(t;\bm {\sigma})$ and  system (\ref{system:01}) (\ref{system:02}) with $u(t)=u^r(s;\bm {\sigma},\bm \theta)$, respectively.

Moreover, we have the following linear constraint due to the fixed total time derivation of the valve operation process:
\be\label{linearconstrains}
\theta_1+\theta_2+\cdots+\theta_r=T.
\ee
Then the continuity condition of the flow rate in (\ref{piecelinearcase}) becomes following nonlinear constraints:
\be\label{nonlinearconstrains}
\sigma_1^k\sum\limits_{m = 1}^{k } {\theta ^k }  + \sigma_2^k=\sigma_1^{k+1}\sum\limits_{ m=1}^{k } {\theta ^k }  + \sigma_2^{k+1},\quad k=1,..., r-1.
\ee
\begin{problemP}
\textit{Given the system (\ref{system:01}) (\ref{system:02}) with boundary conditions (\ref{boundary condition001}) (\ref{approximate piecewise}) and initial conditions (\ref{system:03}), choose the $u^r(s;\bm \sigma,\bm \theta)$ to minimize the objective function (\ref{obj-rtrans})  subject to the  constraints  (\ref{terminalconstraint}), (\ref{linearconstrains}), (\ref{nonlinearconstrains}).}
\end{problemP}
\section{Gradient Computation} \label{sensitivity}
Problem P$^r_0$ becomes a nonlinear programming problem. 
Since its gradient is an implicit function, we rewrite the objective function (\ref{obj-rtrans}) and the variational method~\cite{cao2003adjoint,weinstock2012calculus,moura2011optimal} is used to obtain the gradient formulas. The augmented objective function is defined as
\begin{equation}
\begin{aligned}\label{Objective funtion lagerlanri1}
{J^r(\bm {\sigma},\bm \theta)} &= \frac{1}{T}\sum\limits_{k = 1}^r \int_{k-1}^{k} \theta^k {\bigg[\frac{\tilde{p}^r(L,s) - P}{\bar P}\bigg]^{2\gamma} } ds + \frac{1}{{LT}}  \int_0^L \Bigg \{ \sum\limits_{k = 1}^r  \int_{k-1}^{k} \bigg\{ \theta^k  \bigg[\frac{\tilde{p}^r(l,s) - P}{\bar P}\bigg]^{2\gamma} \\& \quad + \tilde{\lambda}(l,s) H_1 (l,s)+ \tilde{\mu} (l,s) H_2 (l,s) \bigg\} ds \Bigg \} dl,
\end{aligned}
\end{equation}
where $\tilde{\lambda}(l,s)$, $\tilde{\mu}(l,s)$ are the Lagrangian multipliers and $H_1(l,s)$ , $H_2(l,s)$ are defined in~(\ref{system001}). Using integration by parts for (\ref{Objective funtion lagerlanri1}), we can rewrite the objective function as
\begin{equation}
\begin{aligned}
&{J^r(\bm {\sigma},\bm \theta)}\\&=  \frac{1}{{LT}}  \int_0^L \Bigg \{ \sum\limits_{k = 1}^r  \int_{k-1}^{k} \theta^k{ {\bigg\{\bigg[\frac{\tilde{p}^r(l,s) - P}{\bar P}\bigg]^{2\gamma} }}- \bigg[\frac{1}{\rho}\tilde{\lambda}_l(l,s) +\frac {\tilde{\mu}_s(l,s)}{\theta ^k } \bigg] {\tilde{p}^r}(l,s)
\\ &\quad + \tilde{\lambda} \bigg[ \frac{{f \left| {\tilde{v}^r(l,s)} \right| }}{{2D}}  -\frac {\tilde{\lambda}_s(l,s)}{\theta ^k } - {\rho}c^2\tilde{\mu}_l(l,s) \bigg]{\tilde{v}^r(l,s)}      \bigg\} ds\Bigg \}dl\\
 &\quad +\frac{1}{T}\sum\limits_{k = 1}^r \int_{k-1}^{k} \theta^k  {\bigg\{\bigg[\frac{\tilde{p}^r(L,s) - P}{\bar P}\bigg]^{2\gamma} }+ \frac{1}{L\rho}\bigg[\tilde{\lambda}(L,s)\tilde{p}^r(L,s)- \tilde{\lambda}(0,s)P\bigg]
\\ &\quad  +\frac{\rho c^2}{L}\bigg[\tilde{\mu}(L,s)u^r(s;\bm \sigma,\bm \theta)-\tilde{\mu}(0,s)\tilde{v}^r(0,s)\bigg]\bigg\}ds
\\ &\quad + \frac{1}{{LT}}\int_0^L\bigg\{\bigg[\tilde{\lambda}(l,r)\tilde{v}^r(l,r)- \tilde{\lambda}(l,0)\phi_1 (l)\bigg]+\bigg[\tilde{\mu}(l,r)\tilde{p}^r(l,r)-\tilde{\mu}(l,0)\phi_1 (l)\bigg] \bigg\}dl.
\end{aligned}
\end{equation}
\begin{theorem}
The gradient formulas of the objective function with respect to the  $\bm \sigma=[(\bm \sigma^1)^\top,\dots,\bm (\bm \sigma^r)^\top ]^\top$ and $\bm \theta=[\theta^1,\dots,\theta^r ]^\top$  are given by
\be\label{g1-dp}
  \nabla_{\sigma_1^k} J( \bm \sigma,\bm \theta)=\frac{\rho c^2}{TL} \int_{{k-1}}^{{k}} \tilde{\mu}(L,s) {{\theta}^k} (\sum\limits_{k = 1}^{\left\lfloor s \right\rfloor } {\theta ^k }  + \theta ^{\left\lfloor s \right\rfloor  + 1} (s - \left\lfloor s \right\rfloor )) ds,\quad k = 1,\dots,r,
 \ee
 \be\label{g0-dp-0}
  \nabla_{\sigma_2^k} J( \bm \sigma,\bm \theta)=\frac{\rho c^2}{TL} \int_{{k-1}}^{{k}} \tilde{\mu}(L,s)\theta^k ds,\quad k = 1,\dots,r,
 \ee
\be\label{g2-dp}
\begin{aligned}
 \nabla_{\theta^k} J( \bm \sigma)&=\frac{1}{{LT}}\int_0^L{\Bigg\{   \int_{{k-1}}^{{k}} {\bigg\{\bigg(\frac{\tilde{p}^r- P}{\bar P}\bigg)^{2\gamma} }}- \frac{1}{\rho}\tilde{\lambda}_l{\tilde{p}^r}+ \bigg( \tilde{\lambda}  \frac{{f  \left| {\tilde{v}^r} \right| }}{{2D}} - {\rho}c^2\tilde{\mu}_l \bigg ){\tilde{v}^r}   \bigg\}  ds \Bigg\}dl
\\ &\quad +\frac{1}{T}  \int_{{k-1}}^{{k}} {\Bigg\{\bigg[\frac{\tilde{p}^r(L,s) - P}{\bar P}\bigg]^{2\gamma} }+ \frac{1}{L\rho}\bigg\{\tilde{\lambda}(L,s)\tilde{p}^r(L,s)- \tilde{\lambda}(0,s)P\bigg\}
\\ &\quad  +\frac{\rho c^2}{L}\tilde{\mu}(L,s){\bigg\{\sigma_1^k \big[\sum\limits_{k = 1}^{\left\lfloor s \right\rfloor } \theta ^k   + \theta ^{\left\lfloor s \right\rfloor  + 1} (s - \left\lfloor s \right\rfloor )\big] +\sigma_2 ^k  }\bigg\} \Bigg\}ds
\\&\quad+\frac{\rho c^2}{TL}\Bigg\{\sum\limits_{m = k+1}^{r} \int_{{m-1}}^{{m}} \tilde{\mu}(L,s) \sigma_1^m  {{\theta}^m} ds+\int_{ k-1}^{k}\tilde{\mu}(L,s) \sigma_1^k  {{\theta}^k}s ds \Bigg\}    ,\\ \quad k = 1,\dots,r,
\end{aligned}
\ee
where $\tilde{\mu}(l,s)$ and  $\tilde{\lambda}(l,s)$ can be solved from the following costate system
\be\label{costate}
\left\{
\begin{aligned}
& \frac{{2\gamma}{\theta}^k}{\bar P}\bigg[\frac {\tilde{p}^r(l,s) - P}{\bar P}\bigg]^{2\gamma-1}- \frac{1}{\rho}{\theta}^k\frac{{\partial \tilde{\lambda}(l,s) }}{{\partial l}} -  \frac{{\partial \tilde{\mu} (l,s)}}{{\partial s}}=0,\\
&{\theta}^k \tilde{\lambda} \frac{{f  \left| {\tilde{v}^r(l,s)} \right| }}{{D} } -\frac{{\partial  \tilde{\lambda}(l,s) }}{{\partial s}}-{\theta}^k{\rho c^2 }\frac{{\partial \tilde{\mu}(l,s) }}{{\partial l}}=0,\\
&\frac{1}{\rho }\tilde{\lambda}(L,s) +\frac{{2\gamma}L}{\bar P}{\bigg[\frac{\tilde{p}^r(L,s) - P}{\bar P}\bigg]^{2\gamma-1}}=0,\\
&\tilde{\mu}(0,s)=0,\tilde{\lambda}(l,r)=\tilde{\mu}(l,r)=0,
\end{aligned}
\quad  s\in[{k-1}, k),\quad k = 1,\dots,r.
\right.
\ee
\end{theorem}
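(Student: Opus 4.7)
The plan is to apply the variational (adjoint) method to the augmented objective displayed immediately above the theorem. The multipliers $\tilde\lambda(l,s)$ and $\tilde\mu(l,s)$ are initially free; I will pin them down by requiring that every first-order contribution of the state variations $\delta\tilde p^r,\delta\tilde v^r$ cancels, leaving $\delta J^r$ as an explicit linear functional of the parameter variations $\delta\bm\sigma$ and $\delta\bm\theta$ only.

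First I would perturb $(\bm\sigma,\bm\theta)$ and compute the induced $\delta J^r$ to first order from the integrated-by-parts form of $J^r$ that immediately precedes the theorem statement. Since $\tilde p^r(0,s)=P$ and $\tilde v^r(L,s)=u^r(s;\bm\sigma,\bm\theta)$ are prescribed on the lateral boundary, and $\tilde p^r(l,0),\tilde v^r(l,0)$ are fixed by (\ref{system:03}), the only free state variations are the two interior ones $\delta\tilde p^r,\delta\tilde v^r$, the boundary traces $\delta\tilde v^r(0,s)$ and $\delta\tilde p^r(L,s)$, and the two terminal traces $\delta\tilde p^r(l,r),\delta\tilde v^r(l,r)$. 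I would then collect the coefficient of each free variation. Setting the two interior coefficients to zero reproduces the first two lines of the costate system~(\ref{costate}); the coefficient of $\delta\tilde p^r(L,s)$ forces the mixed boundary condition $\tilde\lambda(L,s)/\rho+(2\gamma L/\bar P)[(\tilde p^r(L,s)-P)/\bar P]^{2\gamma-1}=0$; the coefficient of $\delta\tilde v^r(0,s)$ forces $\tilde\mu(0,s)=0$; and the two terminal-trace coefficients give $\tilde\lambda(l,r)=\tilde\mu(l,r)=0$.

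With the costate fixed in this way, $\delta J^r$ reduces to the surviving boundary integral $(\rho c^2/TL)\int_0^{r}\tilde\mu(L,s)\,\delta u^r(s;\bm\sigma,\bm\theta)\,ds$ together with the $\theta^k$-explicit terms coming from the time-stretch factor in front of each $H_i$ and in front of each integrand on $[k-1,k]$. Differentiating $u^r$ using $\partial u^r/\partial\sigma_1^k=\psi(s|\bm\theta)\chi_{[k-1,k)}(s)$ and $\partial u^r/\partial\sigma_2^k=\chi_{[k-1,k)}(s)$ then yields (\ref{g1-dp}) and (\ref{g0-dp-0}) directly.

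The main obstacle will be the $\bm\theta$-gradient~(\ref{g2-dp}), because $\theta^k$ enters $J^r$ in four conceptually distinct channels: as a scaling factor multiplying the whole integrand on $[k-1,k]$; inside the coefficients of the operators $H_1,H_2$ restricted to that same interval; \emph{cumulatively} inside $\psi(s|\bm\theta)$, so that $\theta^k$ influences $u^r$ on every later interval $[m-1,m]$ with $m\geq k$; and partially, via the piecewise-linear prefactor $(s-\lfloor s\rfloor)$, on $[k-1,k]$ itself. Each channel must be differentiated separately and then summed; the cumulative dependence of $\psi$ on $\theta^k$ is precisely what produces the sum $\sum_{m=k+1}^{r}$ together with the extra single-interval term in the last line of~(\ref{g2-dp}). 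The equality constraint~(\ref{linearconstrains}) is not needed at this stage, since the theorem states the unconstrained gradient of $J^r$; the constraint is enforced downstream by the NLP solver.
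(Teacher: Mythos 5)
Your proposal follows essentially the same route as the paper: a first-order perturbation of $(\bm\sigma,\bm\theta)$, elimination of the state-sensitivity terms by forcing their coefficients to vanish (which yields exactly the interior costate equations, the lateral conditions $\tilde\mu(0,s)=0$ and the $\tilde\lambda(L,s)$ relation, and the terminal conditions), and then reading off the gradients from the surviving explicit parameter terms, including the cumulative dependence of $\psi(s|\bm\theta)$ on $\theta^k$ that produces the $\sum_{m=k+1}^{r}$ and single-interval terms in (\ref{g2-dp}). The only slip is that the surviving boundary integral should carry the time-stretch weight, i.e.\ $\frac{\rho c^2}{TL}\int_0^r\theta^{\lfloor s\rfloor+1}\tilde\mu(L,s)\,\delta u^r\,ds$, which is where the factor $\theta^k$ in (\ref{g1-dp}) and (\ref{g0-dp-0}) comes from; with that correction your plan reproduces the paper's proof.
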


\begin{proof}
By introducing the variational forms $\bm \theta + \epsilon  \tilde{\bm\theta}$,$\bm \sigma + \epsilon  \tilde{\bm\sigma}$, where $\epsilon$ is an arbitrarily positive constant, $\tilde{\bm\theta}=[\tilde{\theta}^1,\dots,\tilde{\theta}^r]^\top$, $\tilde{\bm\sigma}=[( \tilde{\bm \sigma}^1)^\top,\dots,(\tilde{\bm \sigma}^r)^\top]^\top$ are arbitrarily vectors chosen nontrivially, then (\ref{time diferential}) (\ref{approximate piecewise}) change to
\be
\frac{{dt(s;\bm \theta + \epsilon  \tilde{\bm  \theta})}}{{ds}} = \sum\limits_{k = 1}^r ({\theta ^k +\epsilon \tilde{\theta}^k})\chi_{[k-1,k)}(s), \quad s\in [0,r],
\ee
and
\be
\begin{aligned}
& u ^r(s;\bm \sigma + \epsilon  \tilde{\bm  \sigma},\bm \theta + \epsilon  \tilde{\bm  \theta})\\ &= \sum\limits_{k = 1}^r \bigg\{(\sigma_1^k+\epsilon \tilde{\sigma_1}^k) (\sum\limits_{k = 1}^{\left\lfloor s \right\rfloor } {(\theta ^k +\epsilon\tilde{\theta} ^k} )
+ (\theta ^{\left\lfloor s \right\rfloor  + 1}+\epsilon \tilde{\theta} ^{\left\lfloor s \right\rfloor  + 1}) (s - \left\lfloor s \right\rfloor ))
  +\sigma_2 ^k +\epsilon \tilde{\sigma_2} ^k \bigg \}  \chi_{[{k - 1} ,k )}(s).
\end{aligned}
\ee
The corresponding perturbation for $\tilde{p}^r(l,s)$ and $\tilde{v}^r(l,s)$ are approximated as
\be
\begin{aligned}
\tilde{p}^r(l,s;\bm \theta + \epsilon  \tilde{\bm\theta},\bm \sigma + \epsilon  \tilde{\bm\sigma}) = \tilde{p}^r(l,s;\bm\theta,\bm \sigma) +\sum\limits_{k = 1}^r  \langle \nabla_{\theta^k} \tilde{p}^r(l,s;\bm\theta,\bm \sigma), \epsilon \tilde{\theta}^k\rangle \chi_{[{k - 1} ,k )}(s)
\\\quad +\sum\limits_{k = 1}^r  \langle \nabla_{\sigma^k} \tilde{p}^r(l,s;\bm\theta,\bm \sigma), \epsilon \tilde{\bm\sigma}^k\rangle \chi_{[{k - 1} ,k )}(s)+\mathcal O(\epsilon^2),
\end{aligned}
\ee
\be
\begin{aligned}
\tilde{v}^r(l,s;\bm \theta + \epsilon  \bm\theta,\bm \sigma + \epsilon  \bm\sigma) = \tilde{v}^r(l,s;\bm\theta,\bm \sigma) +\sum\limits_{k = 1}^r  \langle \nabla_{\theta^k} \tilde{v}^r(l,s; \bm\theta,\bm \sigma), \epsilon \tilde{\theta}^k\rangle \chi_{[{k - 1} ,k )}(s)
\\\quad +\sum\limits_{k = 1}^r  \langle \nabla_{\sigma^k} \tilde{v}^r(l,s;\bm\theta,\bm \sigma), \epsilon \tilde{\bm \sigma}^k\rangle \chi_{[{k - 1} ,k )}(s)+\mathcal O(\epsilon^2),
\end{aligned}
\ee
where $\mathcal O(\epsilon^2)$ denotes higher order terms such that $\mathcal O(\epsilon^2)\rightarrow 0$ as $\epsilon\rightarrow 0$, defining the new notation $\eta_1^k=\langle \nabla_{\theta^k} \tilde{p}^r(l,s;\bm\theta,\bm \sigma), \tilde{\theta}^k\rangle,  \eta_2^k=\langle \nabla_{\sigma^k} \tilde{p}^r(l,s;\bm\theta,\bm \sigma), \tilde{\bm \sigma}^k\rangle$ and $\omega_1^k=\langle \nabla_{\theta^k} \tilde{v}^r(l,s; \bm\theta,\bm \sigma), \tilde{\theta}^k\rangle,\omega_2^k=\langle \nabla_{\sigma^k} \tilde{v}^r(l,s;\bm\theta,\bm \sigma), \tilde{\bm \sigma}^k\rangle$.  Then, the perturbed augmented objective function takes the following form
\begin{equation}
\begin{aligned}
&J(\bm \theta + \epsilon  \tilde{\bm\theta},\bm \sigma + \epsilon  \tilde{\bm\sigma})
\\&=   \frac{1}{{LT}}\int_0^L{\Bigg\{ \sum\limits_{k = 1}^r  \int_{{k-1}}^{{k}}({\theta^k +\epsilon \tilde{\theta}^k}) {\bigg\{\bigg[\frac{\tilde{p}^r+\epsilon\eta_1^k+\epsilon\eta_2^k - P}{\bar P}\bigg]^{2\gamma} }}- \bigg[ \frac{1}{\rho}\tilde{\lambda}_l + \frac {\tilde{\mu}_s}{({\theta ^k +\epsilon \tilde{\theta}^k})} \bigg ]{(\tilde{p}^r+\epsilon\eta_1^k+\epsilon\eta_2^k)}
\\&\quad + \bigg[ \tilde{\lambda}  \frac{{f  \left| {(\tilde{v}^r+\epsilon\omega_1^k+\epsilon\omega_2^k)} \right| }}{{2D}} -\frac{ \tilde{\lambda}_s}{\theta ^k +\epsilon \tilde{\theta}^k} - {\rho}c^2\tilde{\mu}_l \bigg ]{(\tilde{v}^r+\epsilon\omega_1^k+\epsilon\omega_2^k)}   \bigg\}  ds \Bigg\}dl\\
 &\quad +\frac{1}{T}\sum\limits_{k = 1}^r  \int_{{k-1}}^{{k}} ({\theta ^k +\epsilon \tilde{\theta}^k}) {\Bigg\{\bigg[\frac{\tilde{p}^r(L,s)+\epsilon\eta_1^k(L,s)+\epsilon\eta_2^k(L,s) - P}{\bar P}\bigg]^{2\gamma} }
 \\ &\quad + \frac{1}{L\rho}\bigg\{\tilde{\lambda}(L,s)\bigg[\tilde{p}^r(L,s)+\epsilon\eta_1^k(L,s)+\epsilon\eta_2^k(L,s)\bigg]- \tilde{\lambda}(0,s)P\bigg\}
\\ &\quad  +\frac{\rho c^2}{L}\Bigg\{\tilde{\mu}(L,s){\bigg\{(\sigma_1^k+\epsilon \tilde{\sigma_1}^k) (\sum\limits_{k = 1}^{\left\lfloor s \right\rfloor } {(\theta ^k +\epsilon\tilde{\theta} ^k} ) + (\theta ^{\left\lfloor s \right\rfloor  + 1}+\epsilon \tilde{\theta} ^{\left\lfloor s \right\rfloor  + 1}) (s - \left\lfloor s \right\rfloor )) +\sigma_2 ^k +\epsilon \tilde{\sigma_2} ^k \bigg \} }
\\&\quad-\tilde{\mu}(0,s)\bigg[\tilde{v}^r(0,s)+\epsilon\omega_1^k(0,s)+\epsilon\omega_2^k(0,s)\bigg]\bigg\}\Bigg\}ds
 \\ &\quad+ \frac{1}{{LT}}\int_0^L\Bigg\{\bigg\{\tilde{\lambda}(l,r)\bigg[\tilde{v}^r(l,r)+\epsilon\omega_1^k(l,r)+\epsilon\omega_2^k(l,r)\bigg]
- \tilde{\lambda}(l,0)\phi_1 (l)\bigg\}
 \\ &\quad+\bigg\{\tilde{\mu}(l,r)\bigg[\tilde{p}^r(l,r) + \epsilon\eta_1^k(l,r)+ \epsilon\eta_2^k(l,r)\bigg]-\tilde{\mu}(l,0)\phi_1 (l)\bigg\} \Bigg\}dl. \label{Objective funtion discrete}
\end{aligned}
\end{equation}
By computing the derivative of $J(\bm \theta + \epsilon  \tilde{\bm\theta},\bm \sigma + \epsilon  \tilde{\bm\sigma})$ with respect to the parameters $\epsilon$, we can obtain
\begin{equation}
\begin{aligned}
&\frac{d{J(\bm \theta + \epsilon  \tilde{\bm\theta},\bm \sigma + \epsilon  \tilde{\bm\sigma})}}{d\epsilon}
\\&=   \frac{1}{{LT}}\int_0^L{\Bigg\{ \sum\limits_{k = 1}^r  \int_{{k-1}}^{{k}} \tilde{\theta}^k {\bigg\{\bigg[\frac{\tilde{p}^r+\epsilon\eta_1^k+\epsilon\eta_2^k - P}{\bar P}\bigg]^{2\gamma} }}-\frac{1}{\rho}\tilde{\lambda}_l {(\tilde{p}^r+\epsilon\eta_1^k+\epsilon\eta_2^k)}
\\&\quad + \bigg[ \tilde{\lambda}  \frac{{f  \left| {(\tilde{v}^r+\epsilon\omega_1^k+\epsilon\omega_2^k)} \right| }}{{2D}} - {\rho}c^2\tilde{\mu}_l \bigg ]{(\tilde{v}^r+\epsilon\omega_1^k+\epsilon\omega_2^k)}   \bigg\}  ds \Bigg\}dl
\\& \quad +  \frac{1}{{LT}}\int_0^L{\Bigg\{ \sum\limits_{k = 1}^r  \int_{{k-1}}^{{k}} {\bigg\{\frac{2\gamma}{\bar P}({\theta ^k +\epsilon \tilde{\theta}^k})\bigg[\frac{\tilde{p}^r+\epsilon\eta_1^k+\epsilon\eta_2^k - P}{\bar P}\bigg]^{2\gamma-1} }}- \bigg[ \frac{1}{\rho}({\theta ^k +\epsilon \tilde{\theta}^k})\tilde{\lambda}_l + {\tilde{\mu}_s} \bigg ]\bigg\}{(\eta_1^k+\eta_2^k)}
\\&\quad +\bigg[ ({\theta ^k +\epsilon \tilde{\theta}^k}) \tilde{\lambda}  \frac{{f  \left| {(\tilde{v}^r+\epsilon\omega_1^k+\epsilon\omega_2^k)} \right| }}{{D}} -{ \tilde{\lambda}_s} - ({\theta ^k +\epsilon \tilde{\theta}^k}){\rho}c^2\mu_l \bigg ]{(\omega_1^k+\omega_2^k)}   \bigg\}  ds \Bigg\}dl\\
 &\quad +\frac{1}{T}\sum\limits_{k = 1}^r  \int_{{k-1}}^{{k}} {\tilde{\theta}^k} {\Bigg\{\bigg[\frac{\tilde{p}^r(L,s)+\epsilon\eta_1^k(L,s)+\epsilon\eta_2^k(L,s) - P}{\bar P}\bigg]^{2\gamma} }
 \\ &\quad + \frac{1}{L\rho}\bigg\{\tilde{\lambda}(L,s)\bigg[\tilde{p}^r(L,s)+\epsilon\eta_1^k(L,s)+\epsilon\eta_2^k(L,s)\bigg]- \tilde{\lambda}(0,s)P\bigg\}
\\ &\quad  +\frac{\rho c^2}{L}\bigg\{\tilde{\mu}(L,s){\bigg\{(\sigma_1^k+\epsilon \tilde{\sigma_1}^k) (\sum\limits_{k = 1}^{\left\lfloor s \right\rfloor } {(\theta ^k +\epsilon\tilde{\theta} ^k} ) + (\theta ^{\left\lfloor s \right\rfloor  + 1}+\epsilon \tilde{\theta} ^{\left\lfloor s \right\rfloor  + 1}) (s - \left\lfloor s \right\rfloor )) +\sigma_2 ^k +\epsilon \tilde{\sigma_2} ^k \bigg \} }
\\&\quad-\tilde{\mu}(0,s)\bigg[\tilde{v}^r(0,s)+\epsilon\omega_1^k(0,s)+\epsilon\omega_2^k(0,s)\bigg]\bigg\}\Bigg\}ds\\
 &\quad +\frac{1}{T}\sum\limits_{k = 1}^r  \int_{{k-1}}^{{k}} ({\theta ^k +\epsilon \tilde{\theta}^k}) \bigg\{\frac{2\gamma}{\bar P}\bigg[\frac{\tilde{p}^r(L,s)+\epsilon\eta_1^k(L,s)+\epsilon\eta_2^k(L,s) - P}{\bar P}\bigg]^{2\gamma-1}
  \\ &\quad + \frac{1}{L\rho}\tilde{\lambda}(L,s) \bigg\}\bigg[\eta_1^k(L,s)+\eta_2^k(L,s)\bigg]
\\ &\quad  +({\theta ^k +\epsilon \tilde{\theta}^k})\frac{\rho c^2}{L}\Bigg\{\tilde{\mu}(L,s)\bigg\{\tilde{\sigma_1}^k (\sum\limits_{k = 1}^{\left\lfloor s \right\rfloor } {(\theta ^k +\epsilon\tilde{\theta} ^k} ) + (\theta ^{\left\lfloor s \right\rfloor  + 1}+\epsilon \tilde{\theta} ^{\left\lfloor s \right\rfloor  + 1}) (s - \left\lfloor s \right\rfloor ))
 \\ &\quad +(\sigma_1^k+\epsilon \tilde{\sigma_1}^k) (\sum\limits_{k = 1}^{\left\lfloor s \right\rfloor } {\tilde{\theta} ^k}  +  \tilde{\theta} ^{\left\lfloor s \right\rfloor  + 1} (s - \left\lfloor s \right\rfloor ))+ \tilde{\sigma_2} ^k \bigg \} -\tilde{\mu}(0,s)\bigg[\omega_1^k(0,s)+\omega_2^k(0,s)\bigg]\Bigg\}dt
 \\ &\quad+ \frac{1}{{LT}}\int_0^L\Bigg\{\tilde{\lambda}(l,r)\bigg[\omega_1^k(l,r)+\omega_2^k(l,r)\bigg] +\tilde{\mu}(l,r)\bigg[\eta_1^k(l,r)+\eta_2^k(l,r)\bigg] \Bigg\}dl.
\end{aligned}
\end{equation}
By substituting $\epsilon=0$, we can obtain
\begin{equation}
\begin{aligned}
&\left. {\frac{d{J(\bm \theta + \epsilon  \tilde{\bm\theta},\bm \sigma + \epsilon  \tilde{\bm\sigma})}}{d\epsilon}} \right|_{\epsilon = 0}
\\&=   \frac{1}{{LT}}\int_0^L{\Bigg\{ \sum\limits_{k = 1}^r  \int_{{k-1}}^{{k}} \tilde{\theta}^k {\bigg\{\bigg[\frac{\tilde{p}^r- P}{\bar P}\bigg]^{2\gamma} }}- \frac{1}{\rho}\tilde{\lambda}_l{\tilde{p}^r}+ \bigg[ \tilde{\lambda}  \frac{{f  \left| {\tilde{v}^r} \right| }}{{2D}} - {\rho}c^2\tilde{\mu}_l \bigg ]{\tilde{v}^r}   \bigg\}  ds \Bigg\}dl
\\&+  \frac{1}{{LT}}\int_0^L{\Bigg\{ \sum\limits_{k = 1}^r  \int_{{k-1}}^{{k}} {\bigg\{\frac{2\gamma}{\bar P}{\theta ^k}\bigg[\frac{\tilde{p}^r - P}{\bar P}\bigg]^{2\gamma-1} }}- \bigg[ \frac{1}{\rho}\theta ^k \tilde{\lambda}_l + {\tilde{\mu}_s} \bigg ]\bigg\}{(\eta_1^k+\eta_2^k)}
\\&\quad +\bigg[ {\theta ^k } \tilde{\lambda}  \frac{{f  \left| {\tilde{v}^r} \right| }}{{D}} -{ \tilde{\lambda}_s} - \theta ^k {\rho}c^2\mu_l \bigg ]{(\omega_1^k+\omega_2^k)}   \bigg\}  ds \Bigg\}dl\\
 &\quad +\frac{1}{T}\sum\limits_{k = 1}^r  \int_{{k-1}}^{{k}} {\tilde{\theta}^k} {\Bigg\{\bigg[\frac{\tilde{p}^r(L,s) - P}{\bar P}\bigg]^{2\gamma} }+ \frac{1}{L\rho}\bigg\{\tilde{\lambda}(L,s)\tilde{p}^r(L,s)- \tilde{\lambda}(0,s)P\bigg\}
\\ &\quad  +\frac{\rho c^2}{L}\bigg\{\tilde{\mu}(L,s){\bigg\{\sigma_1^k (\sum\limits_{k = 1}^{\left\lfloor s \right\rfloor } \theta ^k   + \theta ^{\left\lfloor s \right\rfloor  + 1} (s - \left\lfloor s \right\rfloor )) +\sigma ^k_2  \bigg \} } -\tilde{\mu}(0,s)\tilde{v}(0,s)\bigg\}\Bigg\}dt\\
 &\quad +\frac{1}{T}\sum\limits_{k = 1}^r  \int_{{k-1}}^{{k}} \theta ^k  {\bigg\{\frac{2\gamma}{\bar P}\bigg[\frac{\tilde{p}^r(L,s) - P}{\bar P}\bigg]^{2\gamma-1} + \frac{1}{L\rho}\tilde{\lambda}(L,s) }\bigg\}\bigg[\eta_1^k(L,s)+\eta_2^k(L,s)\bigg]
\\ &\quad  +{{\theta}^k} \frac{\rho c^2}{L}\Bigg\{\tilde{\mu}(L,s)\bigg\{\tilde{\sigma_1}^k (\sum\limits_{k = 1}^{\left\lfloor s \right\rfloor } {\theta ^k }  + \theta ^{\left\lfloor s \right\rfloor  + 1} (s - \left\lfloor s \right\rfloor ))
 \\ &\quad +\sigma_1^k (\sum\limits_{k = 1}^{\left\lfloor s \right\rfloor } {\tilde{\theta} ^k}  +  \tilde{\theta} ^{\left\lfloor s \right\rfloor  + 1} (s - \left\lfloor s \right\rfloor ))+ \tilde{\sigma_2} ^k \bigg \} -\tilde{\mu}(0,s)\bigg[\omega_1^k(0,s)+\omega_2^k(0,s)\bigg]\Bigg\}ds
 \\ &\quad+ \frac{1}{{LT}}\int_0^L\Bigg\{\tilde{\lambda}(l,r)\bigg[\omega_1^k(l,r)+\omega_2^k(l,r)\bigg] +\tilde{\mu}(l,r)\bigg[\eta_1^k(l,r)+\eta_2^k(l,r)\bigg] \Bigg\}dl.
\end{aligned}\label{Objective funtion discrete001}
\end{equation}
The optimality condition to minimize objective function is to force $\delta J(u(t))$ to be zero. By using the fundamental lemma in the calculus of variation \cite{{weinstock2012calculus}}, one can obtain the costate system from (\ref{Objective funtion discrete001}) due to the arbitrary choice of $\tilde{\theta}$ and $\tilde{\sigma}$ in the variational form,
\be \label{differential equation}
\left\{ \begin{aligned} \frac{{2\gamma}{\theta}^k}{\bar P}\bigg[\frac {\tilde{p}^r(l,s) - P}{\bar P}\bigg]^{2\gamma-1}- \frac{1}{\rho}{\theta}^k\frac{{\partial \tilde{\lambda}(l,s) }}{{\partial l}} -  \frac{{\partial \tilde{\mu} (l,s)}}{{\partial s}}=0,\\
{\theta}^k \tilde{\lambda}(l,s) \frac{{f  \left| {\tilde{v}^r(l,s)} \right| }}{{D} } -\frac{{\partial  \tilde{\lambda}(l,s) }}{{\partial s}}-{\theta}^k{\rho c^2 }\frac{{\partial \tilde{\mu}(l,s) }}{{\partial l}}=0,
\end{aligned}
\quad  s\in[{k-1}, k),\quad k = 1,\dots,r,
\right.
\ee
where boundary conditions are
\be \label{condition1}
\left\{\begin{aligned}
\frac{1}{\rho }\tilde{\lambda}(L,t) +\frac{{2\gamma}L}{\bar P}{\bigg[\frac{\tilde{p}^r(L,t) - P}{\bar P}\bigg]^{2\gamma-1}}=0,\\
\tilde{\mu}(0,s)=0,
\end{aligned}
\quad  s\in[{k-1}, k),\quad k = 1,\dots,r.
\right.
\ee
The terminal time conditions at $s=r$ are
\be \label{condition2}
\tilde{\lambda}(l,r)=\tilde{\mu}(l,r)=0.
\ee
By substituting (\ref{differential equation})-(\ref{condition2})   to   (\ref{Objective funtion discrete001}), we can obtain
\begin{equation}
\begin{aligned}
&\left. {\frac{d{J(\bm \theta + \epsilon  \tilde{\bm\theta},\bm \sigma + \epsilon  \tilde{\bm\sigma})}}{d\epsilon}} \right|_{\epsilon = 0}
\\&=   \frac{1}{{LT}}\int_0^L{\Bigg\{ \sum\limits_{k = 1}^r  \int_{{k-1}}^{{k}} \tilde{\theta}^k {\bigg\{\bigg[\frac{\tilde{p}^r- P}{\bar P}\bigg]^{2\gamma} }}- \frac{1}{\rho}\tilde{\lambda}_l{\tilde{p}^r}+ \bigg[ \tilde{\lambda}  \frac{{f  \left| {\tilde{v}^r} \right| }}{{2D}} - {\rho}c^2\tilde{\mu}_l \bigg ]{\tilde{v}^r}   \bigg\}  ds \Bigg\}dl
\\ &\quad +\frac{1}{T}\sum\limits_{k = 1}^r  \int_{{k-1}}^{{k}} {\tilde{\theta}^k} {\Bigg\{\bigg[\frac{\tilde{p}^r(L,s) - P}{\bar P}\bigg]^{2\gamma} }+ \frac{1}{L\rho}\bigg\{\tilde{\lambda}(L,s)\tilde{p}^r(L,s)- \tilde{\lambda}(0,s)P\bigg\}
\\ &\quad  +\frac{\rho c^2}{L}\tilde{\mu}(L,s){\bigg\{\sigma_1^k (\sum\limits_{k = 1}^{\left\lfloor s \right\rfloor } \theta ^k   + \theta ^{\left\lfloor s \right\rfloor  + 1} (s - \left\lfloor s \right\rfloor )) +\sigma_2 ^k  \bigg \} } \Bigg\}ds
\\&\quad+\frac{\rho c^2}{TL} \int_{{k-1}}^{{k}}{{\theta}^k} \tilde{\mu}(L,s)\Bigg\{\tilde{\sigma_1}^k (\sum\limits_{k = 1}^{\left\lfloor s \right\rfloor } {\theta ^k }  + \theta ^{\left\lfloor s \right\rfloor  + 1} (s - \left\lfloor s \right\rfloor )) +\sigma_1^k (\sum\limits_{k = 1}^{\left\lfloor s \right\rfloor } {\tilde{\theta} ^k}  +  \tilde{\theta} ^{\left\lfloor s \right\rfloor  + 1} (s - \left\lfloor s \right\rfloor ))+ \tilde{\sigma_2} ^k \Bigg \}  ds.
\end{aligned}
\end{equation}
Therefore, we can obtain the following gradient formulas with respect to the optimization decision variable,
\be
  \nabla_{\sigma_1^k} J( \bm \sigma)=\frac{\rho c^2}{TL} \int_{{k-1}}^{{k}} \tilde{\mu}(L,s) {{\theta}^k} (\sum\limits_{k = 1}^{\left\lfloor s \right\rfloor } {\theta ^k }  + \theta ^{\left\lfloor s \right\rfloor  + 1} (s - \left\lfloor s \right\rfloor )) ds,\quad k = 1,\dots,r,
 \ee

\be
  \nabla_{\sigma_2^k} J( \bm \sigma)=\frac{\rho c^2}{TL} \int_{{k-1}}^{{k}} \tilde{\mu}(L,s)\theta^k ds,\quad k = 1,\dots,r,
 \ee

\be
\begin{aligned}
 \nabla_{\theta^k} J( \bm \sigma)&=\frac{1}{{LT}}\int_0^L{\Bigg\{   \int_{{k-1}}^{{k}} {\bigg\{\bigg[\frac{\tilde{p}^r- P}{\bar P}\bigg]^{2\gamma} }}- \frac{1}{\rho}\tilde{\lambda}_l{\tilde{p}^r}+ \bigg[ \tilde{\lambda}  \frac{{f  \left| {\tilde{v}^r} \right| }}{{2D}} - {\rho}c^2\tilde{\mu}_l \bigg ]{\tilde{v}^r}   \bigg\}  ds \Bigg\}dl
\\ &\quad +\frac{1}{T}  \int_{{k-1}}^{{k}} {\Bigg\{\bigg[\frac{\tilde{p}^r(L,s) - P}{\bar P}\bigg]^{2\gamma} }+ \frac{1}{L\rho}\bigg\{\tilde{\lambda}(L,s)\tilde{p}^r(L,s)- \tilde{\lambda}(0,s)P\bigg\}
\\ &\quad  +\frac{\rho c^2}{L}\tilde{\mu}(L,s){\bigg\{\sigma_1^k (\sum\limits_{k = 1}^{\left\lfloor s \right\rfloor } \theta ^k   + \theta ^{\left\lfloor s \right\rfloor  + 1} (s - \left\lfloor s \right\rfloor )) +\sigma_2 ^k  }\bigg\} \Bigg\}ds
\\&\quad+\frac{\rho c^2}{TL}\Bigg\{\sum\limits_{m = k+1}^{r} \int_{{m-1}}^{{m}} \tilde{\mu}(L,s) \sigma_1^m  {{\theta}^m} ds+\int_{ k-1}^{k}\tilde{\mu}(L,s) \sigma_1^k  {{\theta}^k}s ds \Bigg\}    ,\\ \quad k = 1,\dots,r.
\end{aligned}
\ee
\end{proof}

\section{Numerical Approximation} \label{numerical}
\subsection {Simulation of the State System}
Using the  method of lines, which has been applied to obtain the numerical solution of  the nonlinear SV  model \cite{balogun1988automatic,georges1994nonlinear},   we can decompose the space domain into equally partitions $L_i  = \left[ {l_{i-1}  ,l_{i} } \right]$, $i = 1,\dots ,N$, where $N$ is an even integer with $l_0  = 0$ and $l_{N}  = L$.  Let $\tilde{v}^r_i (s) = \tilde{v}^r(l_i,s)$, $i = 0,\dots,N,$ and  $\tilde{p}^r_i (s) = \tilde{p}^r(l_i ,s),i = 0,\dots,N$.  We  make the following finite difference approximation scheme
\begin{subequations}
\begin{align}
\hspace{-3mm}\frac{{\partial \tilde{p}_i^r(s)}}{{\partial l}}\hspace{-1mm} &= \hspace{-1mm}\frac{{{\tilde{p}^r_{i+1}}(s) - {\tilde{p}^r_{i }}(s)}}{\Delta l}, \quad i = 0,\ldots, N-1,\label{aproximate1}\\
\hspace{-3mm}\frac{{\partial \tilde{v}_i^r(s)}}{{\partial l}} \hspace{-1mm} &= \hspace{-1mm} \frac{{{\tilde{v}^r_i}(s) - {\tilde{v}^r_{i - 1}}(s)}}{\Delta l} ,\quad i = 1,\ldots, N,\label{aproximate2}
\end{align}
\end{subequations}
where $\Delta l = L/N$. Then,  we  substitute the  approximations (\ref{aproximate1}) and (\ref{aproximate2}) into the transformed dynamic system (\ref{system:01}) and  (\ref{system:02}) to obtain the following finite dimensional representation
\begin{subequations}\label{subequation001}
\begin{align}
&{\dot {\tilde{v}}^r_{i }(s)}  = \theta^k \frac{1}{\rho \Delta l}(\tilde{p}^r_{i } (s) - \tilde{p}^r_{i+1} (s))- \theta^k \frac{{f \tilde{v}^r_i(s) \left| {\tilde{v}^r_i(s)} \right| }}{{2D}}, \quad i = 0,\dots,N-1 , \label{pequation1}\\
&{\dot {\tilde{p}}^r_{i }(s)}  = \theta^k \frac{\rho c^2}{ \Delta l }({\tilde{v}^r_{i-1} (s) - \tilde{v}^r_{i } (s)}), \quad i = 1,\dots,N . \label{pequation2}
\end{align}
\end{subequations}
For the initial conditions, we obtain
\begin{equation}
\tilde p^r(l,0) = \phi_1(l_i),\quad \tilde v^r(l,0) = \phi_2(l_i),\quad i = 0,\dots,N. \label{initial conditions2}
\end{equation}
For the boundary conditions, we have
\begin{equation}
\tilde{p}^r_0(s) =P,~~\tilde{v}^r_{N}(s) = u^r(s;\bm \sigma,\bm \theta).\label{boundary condition2}
\end{equation}
Combining the transformed dynamic system (\ref{subequation001}) with  the initial conditions (\ref{initial conditions2}) and the boundary conditions (\ref{boundary condition2}), we can numerically solve $\tilde{v}^r(l,s)$ and $\tilde{p}^r(l,s)$ forward in time.
\subsection{Numerical Discretization of the Costate System}
Similarly, the method of lines is also applied to solve the costate system (\ref{costate}) numerically. Let $\tilde{\lambda}_i (s) = \tilde{\lambda}(l_i,s)$, $i = 0,\dots,N,$ and  $\tilde{\mu}_i (s) = \tilde{\mu}(l_i ,s),i = 0,\dots,N$, and we can obtain:
\begin{subequations}\label{costatequation}
\begin{align}
&\dot {\tilde{\lambda}}_{i}(s) =\theta^k\tilde{\lambda}_{i}(s) \frac{f \left| {\tilde{v}^r_{i}(s)} \right|}{D}-\theta^k\rho c^2 \frac{ \tilde{\mu}_{{i+1}}(s)-\tilde{\mu}_{i}(s)}{ \Delta l} ,\quad i = 0,\dots, N-1,\label{costate11}\\
&\dot {\tilde{\mu}}_{i}(s) =\theta^k\frac{2\gamma}{\bar P}\bigg[\frac{\tilde{p}^r_i(s) - P}{\bar P}\bigg]^{2\gamma-1}-\theta^k\frac{1}{\rho}\frac{ \tilde{\lambda}_{{i}}(s)-\tilde{\lambda}_{i-1}(s)}{ \Delta l},\quad i = 1,\dots, N. \label{costate12}
\end{align}
\end{subequations}
For the terminal conditions, we have
\be\label{costate2}
\begin{aligned}
&\tilde{\lambda}_{i}(r)=\tilde{\mu}_{i}(r)=0,\quad i = 0,\dots, N.
\end{aligned}
\ee
For the boundary conditions, we obtain from (\ref{condition2})
\be\label{costate3}
\begin{aligned}
&\tilde{\mu}_{0}(s)=0,~~ \tilde{\lambda}_{N}(s)=-\frac{2 \rho L {\gamma}}{\bar P^{2\gamma}}\big[{\tilde{p}^r_{N}(s) - P}\big]^{2\gamma-1}.
\end{aligned}
\ee
With the terminal conditions (\ref{costate2}) and the boundary conditions (\ref{costate3}), and  the values of $\tilde{p}^r_i(s)$ and $\tilde{v}^r_i(s), i=1,\dots,N,$ obtained through solving (\ref{subequation001}), the approximate values of  $\tilde{\lambda}(l,s)$ and $\tilde{\mu}(l,s)$   can be obtained by solving the system (\ref{costatequation}) backward in time. Moreover, we apply the composite Simpson's rule \cite{gerald2003numerical}  to approximate the objective function~(\ref{obj-rtrans}) and its gradient formulas given by (\ref{g1-dp})-(\ref{g2-dp}). For numerical integration, we divide each time  interval into $M$  subintervals. With the same integers $N$ and $M$, we partition the
space and time interval evenly to obtain the mesh points $l_0, l_1, \ldots, l_N$ and $t_0, t_1, \ldots, t_{rM}$, where the step sizes $h=L/N$ and $\omega=T/(rM)$. Then, we can get the numerical integration of (\ref{obj-rtrans}), (\ref{g1-dp}), (\ref{g0-dp-0}), (\ref{g2-dp}).
\begin{figure}
\centering\includegraphics[scale=1.2]{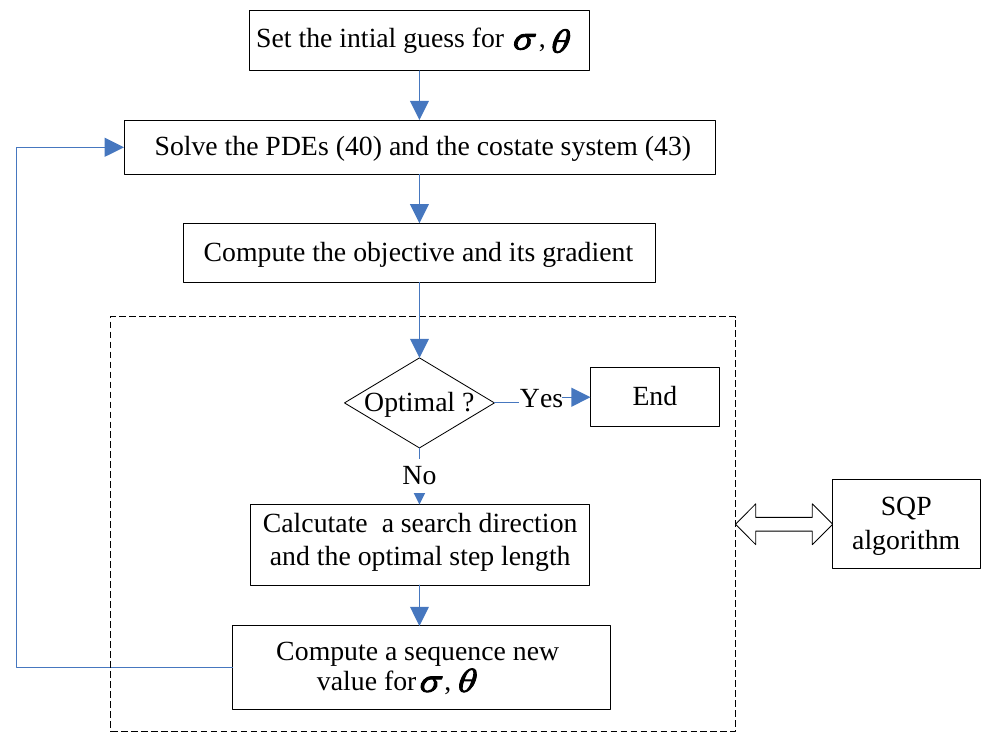}
\caption{Gradient-based optimization framework for solving Problem P$^r_0$ }
\label{process1}
\end{figure}

\subsection{Solving Problem P$^r_0$}
To solve Problem P$^r_0$,  computing the objective function (\ref{obj-rtrans}) and its gradient (\ref{g1-dp})- (\ref{g2-dp})  is the key point.
Since we have  already obtained the values  of $\tilde{p}^r(l,s)$, $\tilde{v}^r(l,s)$, $\tilde{\lambda}(l,s)$ and $\tilde{\mu}(l,s)$, we can  calculate the objective function and its gradient  by applying the numerical integral approximation. Then, we can develop an effective gradient-based optimization technique, such as the SQP method, to solve Problem (P$^r_0$) numerically. The algorithm diagram is shown in Figure \ref{process1}.

Note that Steps 4-5 can be implemented automatically by existing nonlinear optimization solvers, such as FMINICON in MATLAB.

\section{Numerical Simulations}\label{sec:Numerical Simulations}
In this section, we will apply the  proposed computational algorithm to an example to verify the effectiveness of the proposed method in this paper. The pipeline parameters  are taken as: the total pipeline length $L=100$~m, the diameter $D=100$~mm, the flow density $\rho$ =  1000~kg/m$^3$, the  wave speed $c=1200$~m/s, the Darcy-Weisbach friction factor $f=0.03$, $P=2\times10^5$~Pa and  $\bar P = 1\times10^5$ Pa. We also assume that the pipeline fluid flow is initially in the steady state with constant velocity  $\phi_2(l_i)=2~ \textup{m/s},\quad i = 0,\dots,N$. Then the  initial pressure $\phi_2(l)$ is
\begin{equation*}\label{initial pressure}
\phi_2(l_i)=P-\frac{2\rho f }{D}l_i,\quad i = 0,\dots,N.
\end{equation*}
We set $N = 18$ for the spatial discretization of pipeline and choose  $\gamma=2$, $u_{\max}=2 ~ \textup{m/s}$, $T=10$ seconds. Our numerical simulation study was carried out within the MATLAB programming environment (version  R2010b) running on a personal computer with
the following configuration: Intel Core i5-2320 3.00GHz CPU, 4.00GB RAM, 64-bit Windows 7 Operating System.

\begin{table}
  \centering
  \caption{ Optimal control parameters}
  \label{tab2}
  \begin{tabular}{c c c c c c}
   \cmidrule{1-6}
    $k$   & 1    & 2      & 3 & 4    & 5         \\ 
    \cmidrule{1-6}
     ${{\sigma}}_1 ^k$  & $-0.4426$   & $-0.3106$      & $-0.2692$ & $-0.1856$   & $-0.1223 $      \\
         \cmidrule{1-6}
     ${{\sigma}}_2 ^k$  & $2.0000$   & $1.9108$      & $1.8241$ & $1.5972$   & $1.3379 $      \\
     \cmidrule{1-6}
         ${{\theta}} ^k$  & $0.6757$   & $1.1493$      & $0.6207$ & $1.3797$   & $0.6029 $      \\
     \cmidrule{1-6}
      $ k$    &6 & 7    & 8      & 9 & 10    \\
       \cmidrule{1-6}
    ${{\sigma}}_1 ^k$    & $-0.1528$ & $-0.1544$  & $-0.1383$    & $-0.1478$      & $-0.1334$   \\
     \cmidrule{1-6}
    ${{\sigma}}_2 ^k$    & $1.4811$ & $1.4908$  & $1.3866$    & $1.4580$      & $1.3343$   \\
   \cmidrule{1-6}
         ${{\theta}} ^k$  & $1.3668 $  & $0.4275$   & $1.0306$      & $1.0598$ & $1.4169$        \\
     \cmidrule{1-6}

  \end{tabular}
\end{table}

We apply the proposed method to optimize the control sequence $\sigma^k_1, \sigma^k_2, \theta^k, ~k=1,2,\dots,r$. We also set  the number of time segments $r=10$ and the number of subintervals $M=100$.    The  optimal control parameters  are given in Table~{1}.  We compare the optimal control input curves in Figure \ref{controlcurve} obtained by the time-scaling-based method and the time-scaling-free method, respectively. The objective values corresponding to the  time-scaling-based method, time-scaling-free method and constant closure-rate method
are $0.1163$, $0.1512$ and $0.4144$, respectively. Obviously, the constant closure-rate method is worse than the other two methods. {Figure \ref{Comparison} shows} the corresponding pressure changes at the end of the pipeline ($l=L$) associated with valve actuation curves shown in Figure \ref{controlcurve}.    The pressure evolutions along the pipeline according to   both approaches are shown in Figure \ref{time-scaling} and Figure \ref{withouttimescaling}, respectively.  Clearly,  result of the PDE-constraint optimization with the time-scaling approach is better than that without using the time-scaling approach. Using the time-scaling approach, we can change the uniform time interval into a nonuniform time interval. Then, computational optimized time interval will lead to smaller oscillations in the pressure evolution, which is shown in Figure \ref{time-scaling}  comparing to Figure \ref{withouttimescaling}.

\section{Conclusion}\label{sec:conclusion}
In this paper,  we proposed an effective computational method to design  active optimal boundary
control for the Saint-Venant model.  The method of lines is used  to solve the state system and  its costate system. From the numerical simulation, it is observed that result of PDE optimization with time-scaling approach is better than that of PDE optimization without using the time-scaling approach. In the future work, we can apply this method to output command tracking which has been studied in  \cite{rabbani2009flatness,rabbani2010feed} using the differential flatness approach of the simplified Hayami model. 
 For real-time implementation of the proposed control method, we can use feedback control to track the optimal control target if the external perturbation is reasonably small. We can also carry out FPGA-based (Field Programmable Gate Array) implementation for real time optimization instead of software platform combined with model order reduction techniques \cite{yang2012model,xu2013schuster}.

\section*{References}
\bibliographystyle{plain}
\bibliography{RenXu1}
\end{CJK*}
\begin{figure}
\centering\includegraphics[scale=0.85]{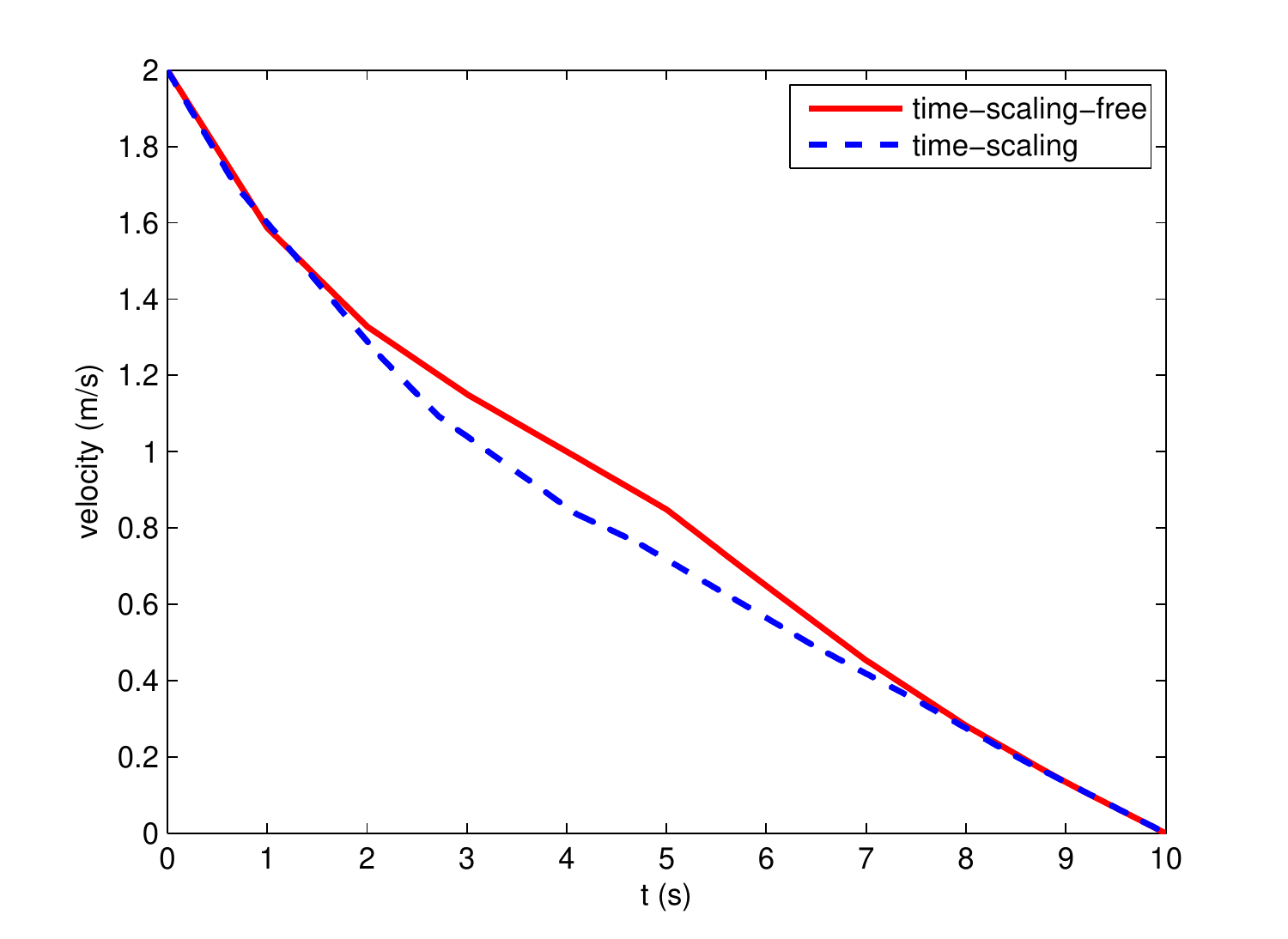}
\caption{Optimal control curves with time scaling approach and without time scaling approach}
\label{controlcurve}
\end{figure}

\begin{figure}
\centering\includegraphics[scale=0.85]{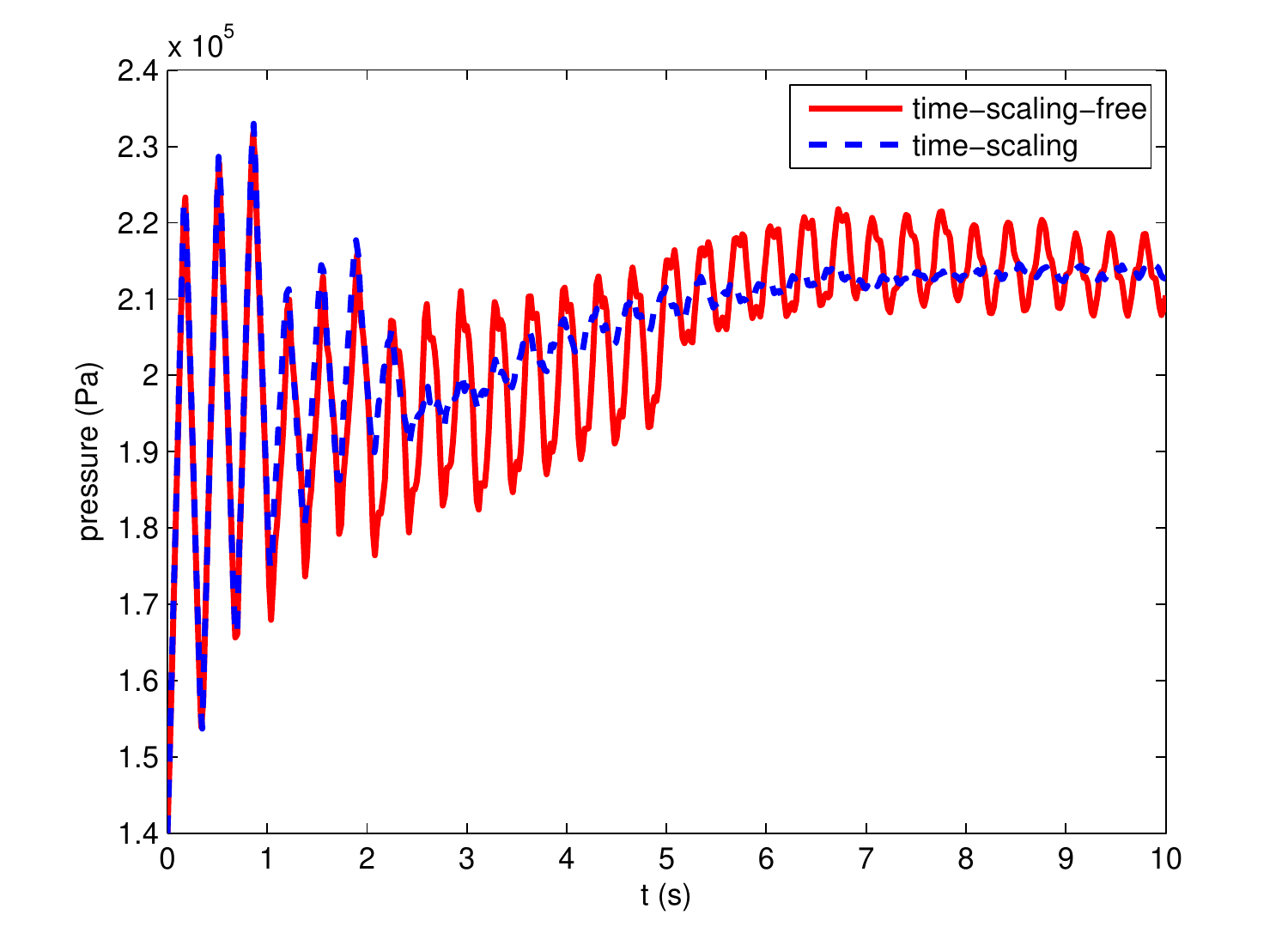}
\caption{Comparison between PDE optimization with time scaling approach and without time scaling approach}
\label{Comparison}
\end{figure}

\begin{figure}
\centering\includegraphics[scale=0.85]{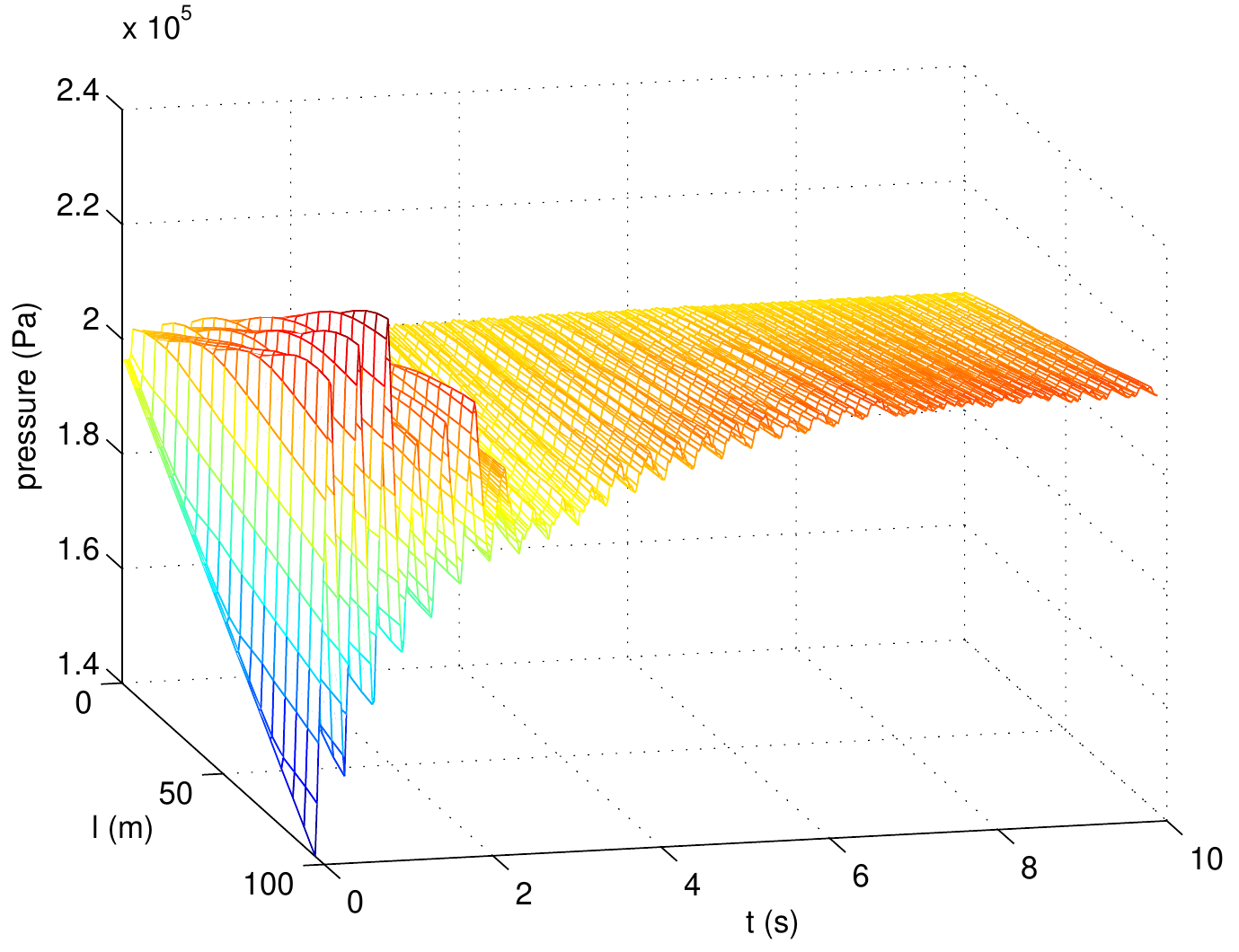}
\caption{PDE optimization with time scaling approach}
\label{time-scaling}
\end{figure}

\begin{figure}
\centering\includegraphics[scale=0.85]{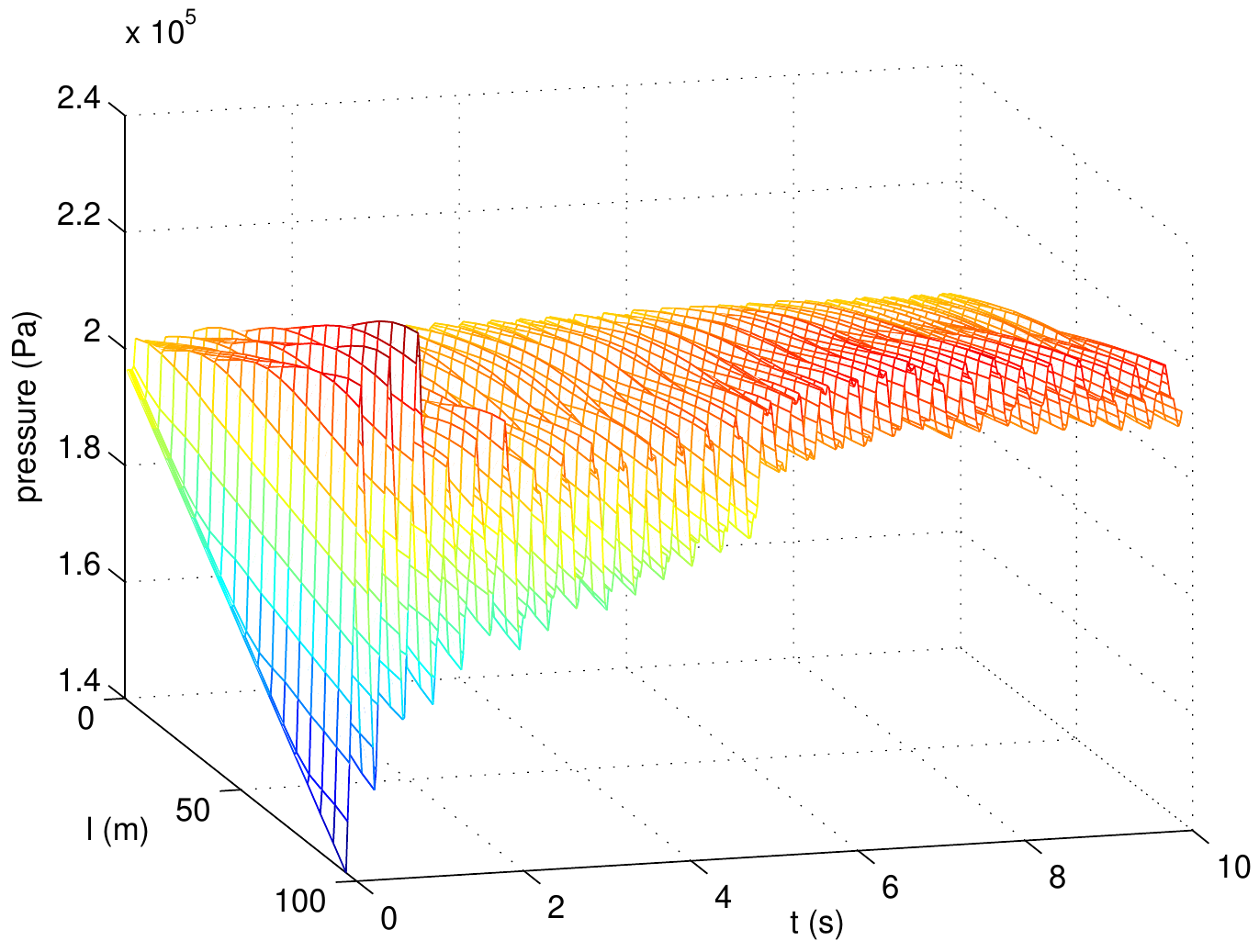}
\caption{PDE optimization without time scaling approach}
\label{withouttimescaling}
\end{figure}
\end{document}